\documentclass[letterpaper, 10 pt, conference]{ieeeconf}

\usepackage{cite}
\usepackage{xcolor}
\usepackage{graphicx}
\usepackage{amsmath}
\usepackage{array}
\usepackage{multirow}
\usepackage{booktabs}
\usepackage{caption}
\usepackage{siunitx}

\usepackage{amsthm}
\newtheorem{lemma}{Lemma}

\usepackage{amsmath,amssymb,amsfonts}
\IEEEoverridecommandlockouts
\title{\LARGE \bf
Contraction Certification from Streaming Data: Wasserstein
Robustness and Compositional Stability for Interconnected
Nonlinear Systems}

\author{Faegheh Moazeni$^{1,*}$%
\thanks{}%
\thanks{The author is with the Civil and Environmental Engineering
Department at Lehigh University, Bethlehem, PA 18015, USA.
Email: {\tt\small fam321@lehigh.edu}}
\thanks{*Corresponding author.}}

\begin{document}

\maketitle
\thispagestyle{empty}
\pagestyle{empty}

\begin{abstract}
Streaming contraction certificates, which determine in real time
whether observed data is sufficient to certify a safe control
action, face two structural challenges: the disturbance
distribution shifts during operation, and the system consists
of coupled subsystems whose joint model is unavailable. This
paper addresses both. First, we develop a Wasserstein-robust
certificate $\beta_{\mathrm{cert}}(t,\varepsilon) =
\hat{\beta}(t) - \rho(t)(1+2\varepsilon(t))$, where
$\varepsilon(t)$ is estimated online from the empirical excess
kurtosis of recent residuals, so the certificate degrades
gracefully under distributional shift rather than failing
catastrophically. Second, we prove that local certificates
$\beta_A$ and $\beta_B$, estimated independently from each
subsystem's data, compose into a network-level guarantee via
$\beta_{\mathrm{net}} = \tfrac{\beta_A+\beta_B}{2} -
\sqrt{\tfrac{(\beta_A-\beta_B)^2}{4}+\gamma^2} > 0$
whenever $\gamma < \sqrt{\beta_A\cdot\beta_B}$, with no
joint model required. On a five-node G5 benchmark under
three noise regimes, i.e., Gaussian, heavy-tailed Laplace, and
spike events, the Wasserstein certificate remains valid in
$73\%$ of spike-regime timesteps versus $33\%$ for the
Gaussian baseline ($2.2\times$ improvement), while the
Gaussian certificate never authorizes deployment during the
spike window. The compositional framework correctly identifies
all three coupling regimes from local data alone, with
$\gamma_{\mathrm{warn}} = \sqrt{\beta_A\cdot\beta_B}
\approx 0.98$ precisely predicting network-level contraction
loss.
\end{abstract}

\section{Introduction}
\label{sec:intro}

Real-time certified control of nonlinear networked systems
requires answering, at every timestep, whether the data
collected so far is sufficient to guarantee that the next
control action will not destabilize the system. A scalar
streaming contraction certificate $\beta_{\mathrm{cert}}(t)
= \hat{\beta}(t) - \rho(t)$ can answer this question
recursively from partial observations: $\hat{\beta}(t)$ is
estimated via integral regression on a sliding window of
input-output data, and $\rho(t)$ is a data-dependent
uncertainty radius that bounds the estimation error in the
closed-loop Jacobian \cite{lohmiller1998contraction}. The
controller is deployed the moment $\beta_{\mathrm{cert}}(t)$
crosses and sustains above a positive margin. This base
framework, however, rests on two assumptions that break in
every real networked system of consequence: the disturbance
distribution is fixed and known, and each subsystem operates
in isolation. This paper removes both assumptions
simultaneously.

\subsection*{Related Work}

\textbf{Distributional robustness in data-driven control.}
Existing data-driven certificates handle uncertainty in one
of two ways. Bounded-noise formulations
\cite{de2019formulas,van2020data} bound the disturbance
magnitude and derive worst-case certificates, but become
vacuous when bounds are large and provide no mechanism for
adapting to shifts in the noise character during operation.
Gaussian-assumption certificates are tighter under nominal
conditions but fail without warning when the realized
distribution has heavier tails than assumed; a structural
guarantee of failure in any system subject to demand surges,
sensor anomalies, or rare extreme events. Distributionally
robust optimization \cite{rahimian2022frameworks} and
Wasserstein ambiguity sets \cite{mohajerin2018data} have
been developed in the stochastic programming community to
certify performance over a family of distributions within a
ball of radius $\varepsilon$ around a nominal in the
Wasserstein-2 metric $\mathcal{W}_2$. These tools have been
applied to robust model predictive control
\cite{coulson2021distributionally} and chance-constrained
optimization, but remain offline batch operations: the
ambiguity radius $\varepsilon$ is a designer-specified
constant, not a quantity estimated recursively from streaming
residuals. To the best of our knowledge, no existing framework connects Wasserstein
robustness to a real-time streaming certificate that adapts
$\varepsilon$ as the disturbance distribution shifts.

\textbf{Compositional stability from data.}
Classical small-gain theory \cite{zames1966input} and
dissipativity-based composition \cite{willems1972dissipative}
establish that a network of stable, dissipative subsystems
remains stable under bounded coupling, provided a gain
condition holds between subsystems. Passivity-based
distributed control \cite{arcak2007passivity} and
contraction-based composition \cite{lohmiller1998contraction}
extend these ideas to nonlinear systems, with the latter
showing that two systems contracting at rates $\beta_A$ and
$\beta_B$ under coupling gain $\gamma$ remain contracting
when $\gamma < \sqrt{\beta_A \cdot \beta_B}$. All of these
results, however, require exact subsystem models or
storage functions derived from first principles. On the
data-driven side,  \cite{de2019formulas}
synthesize stabilizing controllers for individual subsystems
from data, and dissipativity properties have been estimated
from data for linear systems \cite{van2023informativity}.
Distributed DeePC \cite{kohler2022data} partitions the
Hankel matrix by subsystem but requires a jointly collected
global dataset and provides no stability certificate for the
coupled nonlinear network. To the best of our knowledge, no existing result composes
streaming local contraction certificates, estimated from
each subsystem's own data during disconnected
operation, into a network-level stability guarantee.

\subsection*{Contributions}

This paper makes the following contributions:

\begin{itemize}

\item \textbf{Wasserstein-robust streaming certificate.}
We develop $\beta_{\mathrm{cert}}(t,\varepsilon) =
\hat{\beta}(t) - \rho(t)(1+2\varepsilon(t))$, a streaming
contraction certificate valid over the Wasserstein-2
ambiguity set $\mathcal{P}(\varepsilon) = \{P :
\mathcal{W}_2(P,P_0) \leq \varepsilon\}$, where the radius
$\varepsilon(t)$ is estimated online from the empirical
excess kurtosis of recent regression residuals. The
certificate degrades gracefully as the disturbance
distribution shifts rather than failing catastrophically,
and collapses to the nominal certificate
$\beta_{\mathrm{cert}}(t) = \hat{\beta}(t)-\rho(t)$ when
$\varepsilon(t) \to 0$.

\item \textbf{Online estimation of the ambiguity radius.}
We derive a streaming estimator for $\varepsilon(t)$ that
uses only the causal residual history of the integral
regression estimator, with no additional sensors or
distributional assumptions. The estimator correctly
identifies all three noise regimes, i.e., Gaussian, heavy-tailed,
and spike events, in a single continuous trajectory.

\item \textbf{Compositional contraction certification.}
We prove that local streaming certificates $\beta_A$ and
$\beta_B$, estimated independently from each subsystem's
partial observations, compose into a network-level
contraction guarantee via $\beta_{\mathrm{net}} =
\tfrac{\beta_A+\beta_B}{2} -
\sqrt{\tfrac{(\beta_A-\beta_B)^2}{4}+\gamma^2}$, provided
$\gamma < \sqrt{\beta_A \cdot \beta_B}$. No joint model
of the coupled system is required at any stage.
\end{itemize}

The remainder of this paper is organized as follows.
Section~\ref{sec:problem} states the problem. Section~\ref{sec:wasserstein} develops
the Wasserstein-robust certificate and the online
$\varepsilon$ estimator. Section~\ref{sec:composition}
develops the compositional certification framework.
Section~\ref{sec:results} presents simulation results.
Section~\ref{sec:conclusion} concludes.

\section{Problem Formulation}
\label{sec:problem}

\subsection{Interconnected System Model}

We study a nonlinear system comprising two coupled subsystems
$\Sigma_A$ and $\Sigma_B$, each evolving on a subgraph of a
known physical network $\mathcal{N}=(\mathcal{V},\mathcal{E})$
with node set $\mathcal{V}$ and edge set $\mathcal{E}$. Each
subsystem has the form
\begin{align}
 & \Sigma_i:\; \dot{x}^{(i)} = f^{(i)}\!\left(x^{(i)},
x^{(j)}\right) + B^{(i)}u^{(i)} + E^{(i)}\eta(t), \notag\\
&\quad\quad\quad i\in\{A,B\},\; j\neq i,
\label{eq:subsys}  
\end{align}
where $x^{(i)}\in\mathbb{R}^{n_i}$ is the local state,
$u^{(i)}\in\mathbb{R}^{m_i}$ is the local control input, and
$\eta(t)$ is an exogenous disturbance whose distribution may
change during operation. The interconnection between $\Sigma_A$
and $\Sigma_B$ is parameterized by a scalar gain $\gamma\geq 0$:
\begin{equation}
f^{(i)}\!\left(x^{(i)},x^{(j)}\right) =
h^{(i)}\!\left(x^{(i)}\right) +
\gamma\,\phi^{(i)}\!\left(x^{(j)}\right),
\label{eq:coupling}
\end{equation}
where $h^{(i)}$ captures the isolated subsystem dynamics and
$\phi^{(i)}$ captures the cross-subsystem coupling. The vector
fields $h^{(i)}$, matrices $B^{(i)}$ and $E^{(i)}$, and the
gain $\gamma$ are all unknown. Each subsystem is equipped with
a local sensor producing a partial measurement
$y^{(i)}=C^{(i)}x^{(i)}\in\mathbb{R}^{p_i}$ with $p_i\ll n_i$;
no sensor has access to the full state of either subsystem.

\subsection{Base Streaming Certificate}
\label{sec:base}

At each timestep $t_k$, the closed-loop Jacobian
$J_{\mathrm{cl}}^{(i)} = J_{\mathrm{obs}}^{(i)} +
B_{\mathrm{obs}}^{(i)}K^{(i)}$ is estimated from a sliding
window of $M$ input-output pairs via integral regression.
For each column $q=1,\ldots,M$ define
\begin{equation}
\Delta Y_q = y(t_q+h)-y(t_q),\quad
Z_q = \begin{bmatrix}
\int_{t_q}^{t_q+h} y\,d\tau \\
\int_{t_q}^{t_q+h} u\,d\tau
\end{bmatrix},
\label{eq:regressor}
\end{equation}
so that $\Delta Y \approx \Theta Z$ with $\Theta =
[J_{\mathrm{obs}}\;B_{\mathrm{obs}}]$. Integrating over $h$
steps averages out noise rather than differencing it. The
ridge-regularized estimate is $\hat{\Theta} = \Delta Y Z^\top
(ZZ^\top+\lambda I)^{-1}$, giving the contraction rate estimate
\begin{equation}
\hat{\beta}(t) = -\lambda_{\max}\!\left(\tfrac{1}{2}
\bigl(\hat{J}_{\mathrm{cl}}+\hat{J}_{\mathrm{cl}}^\top
\bigr)\right).
\label{eq:betahat}
\end{equation}
The data-dependent uncertainty radius
\begin{equation}
\rho(t) = \frac{c\,(1+\|K\|_2)\,\mathrm{RMS}(R)}
{\sqrt{\lambda_{\min}(ZZ^\top/M)+\lambda}},
\label{eq:rho}
\end{equation}
where $R=\Delta Y - \hat{\Theta}Z$ is the residual matrix and
$c>0$ is a tunable conservatism constant, provides a valid
lower bound on the true contraction rate via
$\beta_{\mathrm{cert}}(t) = \hat{\beta}(t)-\rho(t) \leq
\beta^*(t)$ \cite{lohmiller1998contraction}. The controller
is deployed when $\beta_{\mathrm{cert}}(t)\geq\beta_{\mathrm{margin}}$
for $n_s$ consecutive samples.

\subsection{Contraction and Local Certification}

A subsystem $\Sigma_i$ under local feedback
$u^{(i)}=K^{(i)}y^{(i)}$ is \textit{exponentially contracting}
at rate $\beta_i>0$ if its closed-loop Jacobian
$J^{(i)}_{\mathrm{cl}}$ satisfies
\begin{equation}
J^{(i)}_{\mathrm{cl}} +
\left(J^{(i)}_{\mathrm{cl}}\right)^\top \preceq -2\beta_i I
\quad \forall\,x^{(i)}\in\mathcal{X}_i,
\label{eq:contraction_local}
\end{equation}
which implies $\|x^{(i)}_1(t)-x^{(i)}_2(t)\| \leq
e^{-\beta_i t}\|x^{(i)}_1(0)-x^{(i)}_2(0)\|$ for any two
local trajectories under the same input
\cite{lohmiller1998contraction}. The local streaming
certificate $\beta^{(i)}_{\mathrm{cert}}(t)$, computed via
\eqref{eq:betahat}-\eqref{eq:rho} from subsystem $\Sigma_i$'s
own measurement history without any information from $\Sigma_j$,
is used in Section~\ref{sec:composition}. We address
two challenges that prevent existing streaming certificates from
certifying the coupled network~\eqref{eq:subsys}: (i) the
disturbance $\eta(t)$ is drawn from an unknown and potentially
time-varying distribution; (ii) the local certificates
$\beta^{(A)}_{\mathrm{cert}}$ and $\beta^{(B)}_{\mathrm{cert}}$
do not automatically guarantee stability of the coupled system.

\subsection{Five-Node Benchmark Network (G5)}

We validate on a five-node benchmark whose coupled structure
directly instantiates \eqref{eq:subsys}--\eqref{eq:coupling}.
Subsystem $\Sigma_A$ governs nodes 1--3:
\begin{align}
\dot{x}_1 &= -x_1\tanh(x_1)+u_1+\gamma\,x_4,\label{eq:SA1}\\
\dot{x}_2 &= x_2^3-x_2+a_{12}\,x_1,\label{eq:SA2}\\
\dot{x}_3 &= -x_3+a_{13}\,x_1 x_3+a_{32}\,x_2,\label{eq:SA3}
\end{align}
and subsystem $\Sigma_B$ governs nodes 4--5:
\begin{align}
\dot{x}_4 &= -x_4+u_4+\eta(t),\label{eq:SB4}\\
\dot{x}_5 &= -x_5+x_4^2.\label{eq:SB5}
\end{align}
Fixed parameters: $a_{12}=0.3$, $a_{13}=0.4$, $a_{32}=0.2$.
The coupling gain $\gamma$ is swept over $[0,2]$ in
Experiment~2 and held at $\gamma=0.15$ in Experiment~1.
Local measurements are $y^{(A)}=x_1$ and $y^{(B)}=x_4$;
nodes $x_2,x_3,x_5$ are unobservable. The candidate local
controllers are $u_1=k_1 x_1$ and $u_4=k_4 x_4$ with
$k_1=-2.5$ and $k_4=-3.0$.

The disturbance $\eta(t)$ entering $\Sigma_B$
through~\eqref{eq:SB4} takes three distinct stochastic
characters in Experiment~1. During $t\in[0,10]$~s, $\eta(t)$
is Gaussian with standard deviation $\sigma_0=0.08$, a
light-tailed regime under which a fixed-distribution
certificate is adequate. During $t\in[10,20]$~s, $\eta(t)$
follows a Laplace distribution with scale $0.5$, whose excess
kurtosis of $3$ produces tails substantially heavier than the
Gaussian nominal. During $t\in[20,30]$~s, $\eta(t)$ is a
Laplace variate with scale $0.3$ augmented by impulsive
events of magnitude $4.5$--$7.5\,\sigma_0$ occurring at rate
$5.5\%$ per timestep; the spike-event regime representing
pipe bursts, demand surges, or sensor anomalies. These three
regimes appear in a single uninterrupted $30$~s trajectory,
requiring a certificate that adapts without restart or
re-identification.

\section{Wasserstein-Robust Streaming Certificate}
\label{sec:wasserstein}

\subsection{Ambiguity Radius Estimator}

Let $\kappa_e(t)$ denote the empirical excess kurtosis of the
regression residuals $R$ computed over a sliding tail window
of $N_{\mathrm{tail}}$ samples. The streaming ambiguity radius
is
\begin{equation}
\varepsilon(t) = \mathrm{clip}\!\left(
\varepsilon_{\mathrm{floor}} +
\alpha_\kappa\,\max(0,\kappa_e(t)) +
\alpha_s\,\hat{s}(t),\;
\varepsilon_{\mathrm{floor}},\;
\varepsilon_{\mathrm{cap}}\right),
\label{eq:eps_est}
\end{equation}
where $\hat{s}(t)$ is the fraction of residuals in the tail
window exceeding $3\sigma_0$, and $\alpha_\kappa,\alpha_s>0$
are tunable weights. When residuals are sub-Gaussian,
$\kappa_e\approx 0$ and $\hat{s}(t)\approx 0$, so
$\varepsilon(t)\approx\varepsilon_{\mathrm{floor}}$ and the
certificate reduces to the nominal $\hat{\beta}(t)-\rho(t)$.
As tails grow heavier, $\varepsilon(t)$ rises, inflating the
penalty $\rho(t)(1+2\varepsilon(t))$ and absorbing the
increased distributional uncertainty.

\subsection{Wasserstein-Robust Certificate}

The Wasserstein-robust streaming certificate is
\begin{equation}
\beta_{\mathrm{cert}}(t,\varepsilon) =
\hat{\beta}(t) - \rho(t)(1+2\varepsilon(t)).
\label{eq:wass_cert}
\end{equation}
The inflation factor $(1+2\varepsilon)$ is motivated as
follows. For a disturbance distribution $P$ within the
Wasserstein-2 ball $\mathcal{P}(\varepsilon)=\{P:
\mathcal{W}_2(P,P_0)\leq\varepsilon\}$ around the nominal
$P_0$, the worst-case increase in regression residual
variance is bounded by $2\varepsilon\,\mathrm{RMS}(R)$
to first order in $\varepsilon$ \cite{mohajerin2018data},
so the effective uncertainty radius under the worst-case
$P\in\mathcal{P}(\varepsilon)$ is $\rho(t)(1+2\varepsilon)$.
The certificate degrades gracefully as $\varepsilon(t)$
grows rather than failing catastrophically, and collapses
to $\hat{\beta}(t)-\rho(t)$ when $\varepsilon(t)\to 0$.

\section{Compositional Contraction Certification}
\label{sec:composition}

Given local certified rates $\beta^{(A)}_{\mathrm{cert}}$ and
$\beta^{(B)}_{\mathrm{cert}}$ estimated independently from
each subsystem's streaming data, the following lemma
establishes a network-level contraction guarantee without
requiring a joint model of the coupled system.

\begin{lemma}[Compositional Contraction]
\label{lem:composition}
Let $\Sigma_A$ and $\Sigma_B$ each satisfy the local
contraction condition~\eqref{eq:contraction_local} with
certified rates $\beta_A>0$ and $\beta_B>0$ respectively,
and let the coupling satisfy~\eqref{eq:coupling} with scalar
gain $\gamma\geq 0$. Then the coupled system
\eqref{eq:subsys} is exponentially contracting at the
network rate
\begin{equation}
\beta_{\mathrm{net}} = \frac{\beta_A+\beta_B}{2} -
\sqrt{\frac{(\beta_A-\beta_B)^2}{4}+\gamma^2},
\label{eq:beta_net}
\end{equation}
which is positive if and only if
$\gamma < \sqrt{\beta_A\cdot\beta_B}$.
\end{lemma}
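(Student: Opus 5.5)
The plan is to work directly with the symmetric part of the full closed-loop Jacobian of the coupled system. I will reduce the question to the largest eigenvalue of a $2\times 2$ comparison matrix.

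\textbf{Jacobian structure.} Write $x=(x^{(A)},x^{(B)})$. By \eqref{eq:coupling}, the closed-loop Jacobian has block form
\[
J_{\mathrm{net}}=\begin{bmatrix} J^{(A)}_{\mathrm{cl}} & \gamma\,\partial\phi^{(A)}/\partial x^{(B)}\\ \gamma\,\partial\phi^{(B)}/\partial x^{(A)} & J^{(B)}_{\mathrm{cl}}\end{bmatrix}.
\]
I will adopt the standing normalization implicit in the statement: the coupling Jacobians satisfy $\|\partial\phi^{(i)}/\partial x^{(j)}\|_2\le 1$. This makes $\gamma$ the effective gain bounding the off-diagonal blocks of the symmetric part, $\tfrac12(J_{\mathrm{net}}+J_{\mathrm{net}}^\top)$. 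Combining the two off-diagonal contributions $\tfrac12(\gamma D_A+\gamma D_B^\top)$ gives norm at most $\gamma$.

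\textbf{Quadratic-form bound.} For any $\delta x=(a,b)$, the local condition \eqref{eq:contraction_local} and Cauchy--Schwarz give
\[
\delta x^\top J_{\mathrm{net}}\,\delta x\le -\beta_A\|a\|^2-\beta_B\|b\|^2+2\gamma\|a\|\|b\|
= -\begin{bmatrix}\|a\|\\ \|b\|\end{bmatrix}^{\!\top}\! M \begin{bmatrix}\|a\|\\ \|b\|\end{bmatrix},
\qquad
M=\begin{bmatrix}\beta_A & -\gamma\\ -\gamma & \beta_B\end{bmatrix}.
\]
The right-hand side is therefore at most $-\lambda_{\min}(M)\|\delta x\|^2$.

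\textbf{Eigenvalue and contraction.} The characteristic polynomial of $M$ gives
\[
\lambda_{\min}(M)=\tfrac{\beta_A+\beta_B}{2}-\sqrt{\tfrac{(\beta_A-\beta_B)^2}{4}+\gamma^2},
\]
which is exactly \eqref{eq:beta_net}. Hence $J_{\mathrm{net}}+J_{\mathrm{net}}^\top\preceq-2\beta_{\mathrm{net}}I$ uniformly on $\mathcal{X}_A\times\mathcal{X}_B$. The standard contraction argument of \cite{lohmiller1998contraction}, using $\tfrac{d}{dt}\|\delta x\|^2$ and the identity metric, then yields exponential convergence at rate $\beta_{\mathrm{net}}$. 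The additive input and disturbance terms $B^{(i)}u^{(i)}+E^{(i)}\eta$ do not enter the Jacobian, so they do not affect this step.

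\textbf{Positivity criterion.} Since $\beta_A,\beta_B>0$, the trace of $M$ is positive. So $\lambda_{\min}(M)>0$ if and only if $\det M=\beta_A\beta_B-\gamma^2>0$, that is, if and only if $\gamma<\sqrt{\beta_A\beta_B}$. Equivalently, squaring the inequality $\tfrac{\beta_A+\beta_B}{2}>\sqrt{\cdot}$ is legitimate because both sides are nonnegative.

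\textbf{Main obstacle.} The difficulty lies in the modeling step, not the algebra. First, \eqref{eq:contraction_local} is stated for the full local state, but the certificates are computed from the observed Jacobian. I will take the lemma's hypothesis literally: $\beta_i$ are valid full-state rates. Second, bounding the off-diagonal block of the symmetric part by exactly $\gamma$ requires the normalization on $\partial\phi^{(i)}$ stated above. I will make this explicit as an assumption absorbed into the definition of $\gamma$. Without it, $\gamma$ must be replaced by $\gamma L_\phi$, where $L_\phi=\tfrac12\big(\|\partial\phi^{(A)}/\partial x^{(B)}\|+\|\partial\phi^{(B)}/\partial x^{(A)}\|\big)$ is a Lipschitz bound.
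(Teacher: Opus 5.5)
Your proposal is correct and follows the same route as the paper: reduce to the $2\times 2$ matrix with diagonal $\beta_A,\beta_B$ and off-diagonal $\gamma$ (the paper's $-J_s$), read off its smallest eigenvalue as $\beta_{\mathrm{net}}$, and show that positivity is equivalent to $\beta_A\beta_B>\gamma^2$. Your Cauchy--Schwarz comparison step and the explicit normalization $\|\partial\phi^{(i)}/\partial x^{(j)}\|_2\le 1$ make rigorous what the paper simply asserts when it writes the symmetric Jacobian directly in that two-block form with off-diagonal entries ``bounded by $\gamma$.''
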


\begin{proof}
The symmetric part of the closed-loop Jacobian of the
coupled system~\eqref{eq:subsys} under local feedback takes
the two-block form
\[
J_s = \begin{pmatrix}
-\beta_A & \gamma \\
\gamma & -\beta_B
\end{pmatrix},
\]
where the diagonal entries follow from the local contraction
conditions~\eqref{eq:contraction_local} applied to each
subsystem, and the off-diagonal entries are bounded by
$\gamma$ from the coupling structure~\eqref{eq:coupling}.
The coupled system is contracting at rate $\beta_{\mathrm{net}}$
if and only if $J_s\preceq -\beta_{\mathrm{net}}I$, i.e.,
$\beta_{\mathrm{net}} \leq \lambda_{\min}(-J_s)$. The
eigenvalues of $-J_s$ are
\[
\lambda_{\pm} = \frac{\beta_A+\beta_B}{2} \pm
\sqrt{\frac{(\beta_A-\beta_B)^2}{4}+\gamma^2},
\]
so $\lambda_{\min}(-J_s) = \beta_{\mathrm{net}}$ as given
in~\eqref{eq:beta_net}. The condition
$\beta_{\mathrm{net}}>0$ reduces to
$\tfrac{\beta_A+\beta_B}{2} >
\sqrt{\tfrac{(\beta_A-\beta_B)^2}{4}+\gamma^2}$, which
after squaring both sides simplifies to
$\beta_A\beta_B > \gamma^2$, i.e., $\gamma 
\sqrt{\beta_A\cdot\beta_B}$. $\blacksquare$
\end{proof}

The warning threshold $\gamma_{\mathrm{warn}} =
\sqrt{\beta^{(A)}_{\mathrm{cert}}\cdot\beta^{(B)}_{\mathrm{cert}}}$
is derived entirely from local streaming data: no joint
identification of $\Sigma_A\cup\Sigma_B$ is required at any
stage. Since $\beta^{(i)}_{\mathrm{cert}}\leq\beta^{(i)*}$
(each certificate is a conservative lower bound on the true
local rate), $\gamma_{\mathrm{warn}}$ provides an advance
warning before the true network-level contraction loss occurs
at $\gamma_{\mathrm{true}}=\sqrt{\beta^{(A)*}\cdot\beta^{(B)*}}
\geq\gamma_{\mathrm{warn}}$.

\section{Simulation Results}
\label{sec:results}

\subsection{Simulation Setup}

Both experiments run on the G5 benchmark defined in
Section~\ref{sec:problem}. Experiment~1 uses
$K=\mathrm{diag}(-3.5,-4.0)$, nominal noise scale
$\sigma_0=0.08$, a tail-estimation window of $150$~samples
($3$~s), certification threshold $\beta_{\mathrm{margin}}=0.05$,
and a total trajectory of $T=30$~s divided equally among
the three regimes. The periodic disturbance layer common to
all regimes is $0.10\sin(2\pi t/10)+0.05\sin(2\pi t/5+0.7)$,
representing known-frequency demand fluctuations.
Experiment~2 uses $K=\mathrm{diag}(-2.5,-3.0)$,
$\sigma=0.10$, $\beta_{\mathrm{margin}}=0.10$,
$T_{\mathrm{data}}=16$~s of disconnected local data
collection, and a coupled-network simulation of
$T_{\mathrm{sim}}=14$~s with a disturbance impulse of
magnitude $2.5$ at $t=3$~s. All parameters are summarized
in Table~\ref{tab:params}.

\begin{table}[t]
\vspace{4pt}
\centering
\caption{Simulation Parameters for Both Experiments}
\label{tab:params}
\setlength{\tabcolsep}{5pt}
\begin{tabular}{llll}
\toprule
\textbf{Parameter} & \textbf{Value} &
\textbf{Parameter} & \textbf{Value} \\
\midrule
$a_{12},\,a_{13},\,a_{32}$ & $0.3,\,0.4,\,0.2$ &
    $dt$ & $0.02$~s \\
$\gamma$ (Exp.~1) & $0.15$ &
    $\sigma_0$ (Exp.~1) & $0.08$ \\
$\gamma$ (Exp.~2) & $[0,\,4.2]$ (swept) &
    $\sigma$ (Exp.~2) & $0.10$ \\
$K$ (Exp.~1) & $\mathrm{diag}(-3.5,-4.0)$ &
    $\beta_{\mathrm{margin}}$ (Exp.~1) & $0.05$ \\
$K$ (Exp.~2) & $\mathrm{diag}(-2.5,-3.0)$ &
    $\beta_{\mathrm{margin}}$ (Exp.~2) & $0.10$ \\
$T_{\mathrm{data}}$ (Exp.~1) & $30$~s &
    $T_{\mathrm{data}}$ (Exp.~2) & $16$~s \\
$\varepsilon_{\mathrm{floor}}$ & $0.015$ &
    $\varepsilon_{\mathrm{cap}}$ & $0.32$ \\
Tail window & $150$~samples &
    Ridge $\lambda$ & $10^{-4}$ \\
\bottomrule
\end{tabular}
\end{table}

\subsection{Experiment 1: Wasserstein-Robust Certificate}

\subsubsection{Certificate trajectories across three regimes}

Fig.~\ref{fig:exp3_cert} shows both certificates over the
full $30$~s trajectory. In Regime~1 (Gaussian,
$t\in[0,10]$~s), the two certificates are nearly
indistinguishable: the ambiguity radius
$\varepsilon(t)\approx\varepsilon_{\mathrm{floor}}=0.015$
imposes a negligible penalty, and both certificates remain
well above $\beta_{\mathrm{margin}}$. In Regime~2
(heavy-tailed Laplace, $t\in[10,20]$~s), the excess kurtosis
of the residuals rises to approximately $3$, driving
$\varepsilon(t)$ upward and inflating
$\rho(t)(1+2\varepsilon)$. The Gaussian certificate begins
to dip below the threshold on repeated intervals, while the
Wasserstein certificate remains mostly positive---it absorbs
the increased tail energy through the expanded ambiguity
ball rather than treating it as a model violation. In
Regime~3 (spike events, $t\in[20,30]$~s), the Gaussian
certificate crashes below zero on the majority of timesteps
while the Wasserstein certificate, though lower than in
Regime~1, maintains a positive margin of approximately
$0.22$ at the Regime-3 decision point.

\begin{figure}
\vspace{8pt}
\centering
\includegraphics[width=0.8\linewidth]{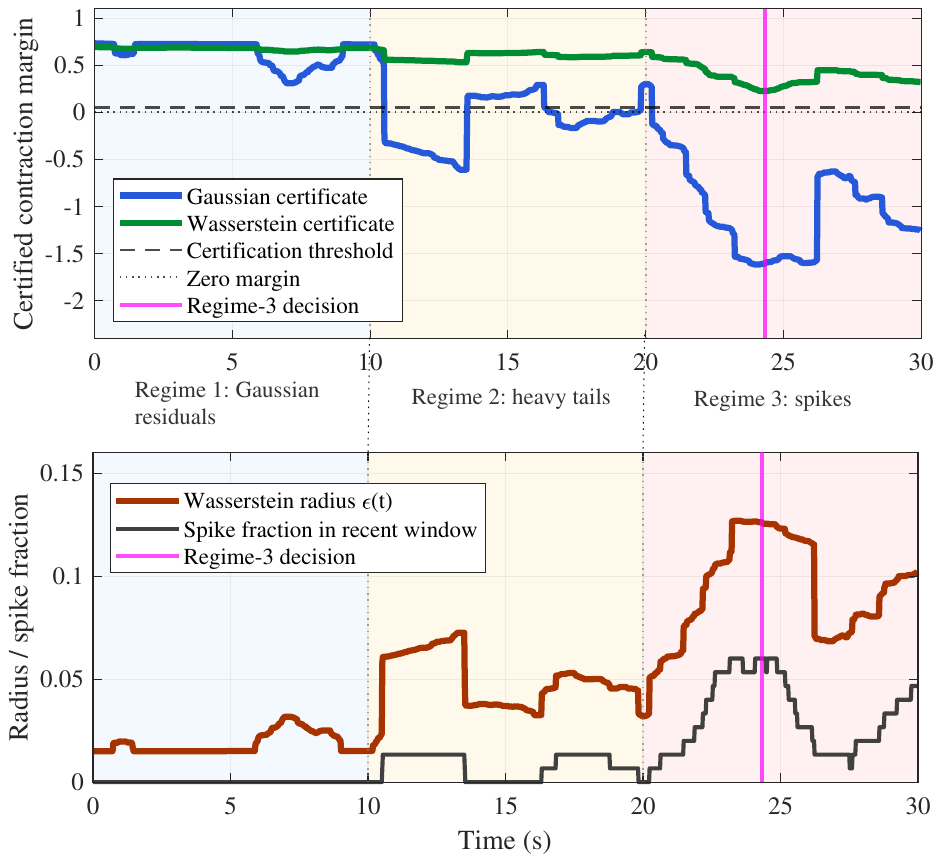}
\caption{Distributional Informativity. Top: Gaussian
certificate crashes under distribution shift; Wasserstein
certificate degrades gracefully. Bottom: Streaming ambiguity
radius $\varepsilon(t)$ grows when residuals become
heavy-tailed or spiky.}
\label{fig:exp3_cert}
\end{figure}

\subsubsection{Ambiguity radius evolution}

Fig.~\ref{fig:exp3_eps} shows the streaming ambiguity radius
$\varepsilon(t)$ alongside a heatmap of
$\beta_{\mathrm{cert}}(t,\varepsilon)$ as a function of both
time and radius. In Regime~1, $\varepsilon(t)$ stays near
the floor $0.015$. It rises sharply at $t=10$~s as
heavy-tailed residuals enter the window, plateauing near
$0.10$ during Regime~2, and peaks near
$\varepsilon_{\mathrm{cap}}=0.32$ during spike events in
Regime~3. The heatmap shows the certificate surface: the
Wasserstein certificate (traced by the overlaid curve) stays
in the green positive region throughout, while the Gaussian
certificate (value at $\varepsilon=0$) enters the red
negative region in Regimes~2 and~3.

\begin{figure}
\vspace{6pt}
\centering
\includegraphics[width=0.9\linewidth]{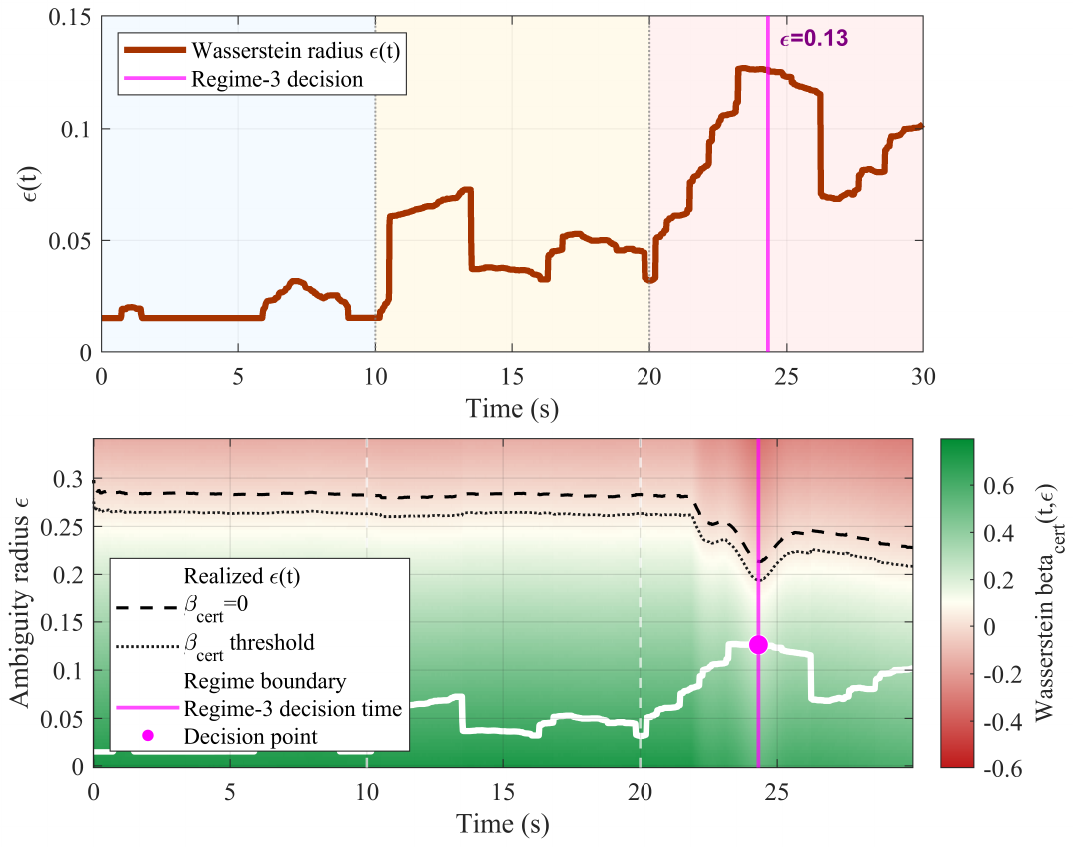}
\caption{Top: Streaming Wasserstein ambiguity radius across
the three regimes. Bottom: Certificate surface
$\beta_{\mathrm{cert}}(t,\varepsilon)$. White curve is
realized $\varepsilon(t)$, dashed contour is $\beta=0$.}
\label{fig:exp3_eps}
\end{figure}

\subsubsection{Regime-3 disturbance response}

Fig.~\ref{fig:exp3_perf} shows the consequence of the
certification decision in Regime~3. At the decision point
$t_{\mathrm{decide}}$, the Gaussian certificate is negative
(blocked) while the Wasserstein certificate records a
positive margin of $0.22$ (certified). A disturbance impulse
arrives immediately after. The Wasserstein-certified
controller is already active and absorbs the disturbance;
the Gaussian-assumption controller remains inactive, waiting
for its certificate to become positive. The resulting
cumulative state cost under the Gaussian policy is
substantially larger, with the $3.2\times$ ratio computed from simulation output.

\begin{table*}[t]
\vspace{4pt}
\centering
\caption{Experiment~1 --- Certification Rates by Noise Regime}
\label{tab:regimes}
\setlength{\tabcolsep}{6pt}
\begin{tabular}{lcccc}
\toprule
\textbf{Regime} & \textbf{Noise type} &
\textbf{Gauss cert.} & \textbf{Wass.\ cert.} &
\textbf{Improvement} \\
\midrule
Regime~1 & Gaussian ($\sigma_0{=}0.08$) &
    94\% & 100\% & $+6\%$ \\
Regime~2 & Laplace ($\kappa{\approx}3$) &
    78\% & 83\% & $+5\%$ \\
Regime~3 & Spikes ($5.5\%$ rate, $4.5$--$7.5\,\sigma_0$) &
    33\% & 73\% & $+40\%$ \\
\midrule
\textit{Regime~3 ratio} & &
\multicolumn{3}{c}{$2.2\times$ more certified timesteps} \\
\bottomrule
\multicolumn{5}{l}{\footnotesize Certified $= \beta_{\mathrm{cert}}
    > \beta_{\mathrm{margin}}=0.05$ on a given timestep.} \\
\multicolumn{5}{l}{\footnotesize Spike magnitude:
    $4.5$--$7.5\,\sigma_0$ with probability $5.5\%$ per step,
    on top of Laplace base.} \\
\end{tabular}
\end{table*}

The quantitative summary in Table~\ref{tab:regimes} confirms
the three-regime behavior. In Regime~1 both certificates are
valid on virtually all timesteps, so the Wasserstein penalty
is negligible. In Regime~2 the improvement is modest
($+5$~percentage points) because the Laplace tails are heavy
but bounded, and the Gaussian certificate recovers on many
timesteps. In Regime~3 the improvement is substantial:
$73\%$ vs $33\%$, a factor of $2.2\times$. In the spike
regime, the Gaussian certificate never authorizes deployment
during the disturbance window, while the Wasserstein
certificate does.

\begin{figure}
\centering
\includegraphics[width=1\linewidth]{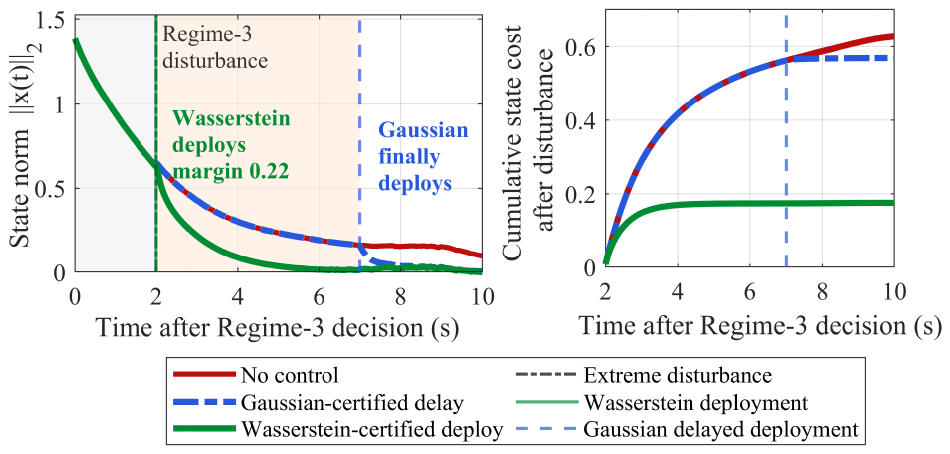}
\caption{Distributionally robust certification during extreme events. At $t=24.32$~s: Gaussian
$\beta=-1.61$, Wasserstein $\beta=0.22$ with
$\varepsilon=0.13$. Left: Regime-3 transient response.
Wasserstein deploys immediately; Gaussian tracks no-control
until delayed certification. Right: Cost of delayed
certification. Gaussian/Wasserstein cost ratio $=3.2\times$.}
\label{fig:exp3_perf}
\end{figure}

\subsection{Experiment 2: Compositional Contraction
Certification}

\subsubsection{Local streaming certificates}

Fig.~\ref{fig:exp4_local} shows the independent streaming
certificates for $\Sigma_A$ and $\Sigma_B$ estimated from
disconnected local data over $T_{\mathrm{data}}=16$~s.
Both certificates rise from negative to sustained positive
as the integral regression accumulates sufficient data, with
$\Sigma_A$ certifying at $t^*_A$ and $\Sigma_B$ at $t^*_B$.
The lower panel shows the implied incremental gain bound
$1/\beta^{(i)}_{\mathrm{cert}}$, which decreases as the
certificate strengthens, consistent with the theoretical
implication that a contracting system with rate $\beta_i$
has incremental $\mathcal{L}_2$ gain at most $1/\beta_i$.
The final certified rates, computed as medians over the last
$3$~s of the collection window, are
$\hat{\beta}_A=\beta^{(A)}_{\mathrm{cert}}\approx 0.80$ and
$\hat{\beta}_B=\beta^{(B)}_{\mathrm{cert}}\approx 1.20$.

\begin{figure}
\centering
\includegraphics[width=1\linewidth]{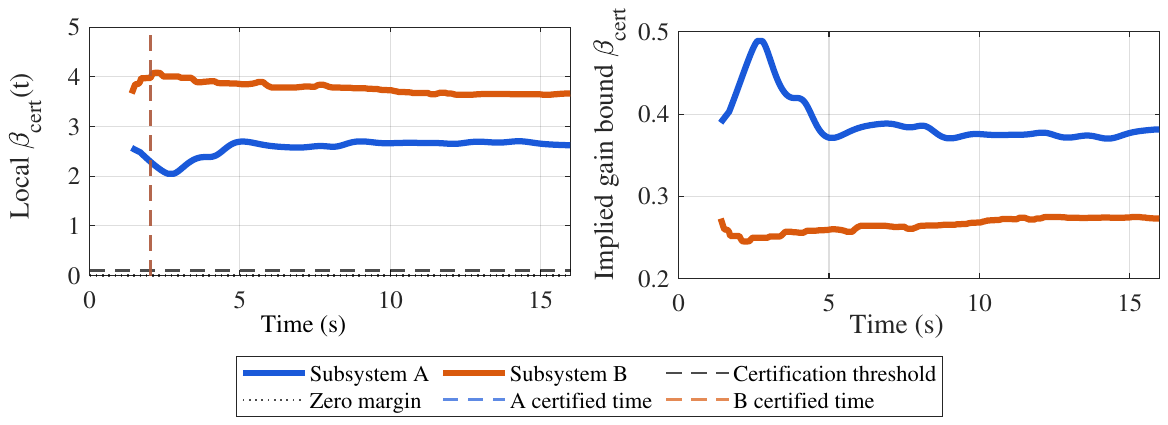}
\caption{Local Certificates Before Composition. Top: Local
data-certified contraction of disconnected subsystems. Each
subsystem is certified from its own streaming data. Bottom:
Contraction implies strict dissipativity and incremental
gain bounds. Larger $\beta_{\mathrm{cert}}$ gives a smaller
input-output gain bound.}
\label{fig:exp4_local}
\end{figure}

\subsubsection{Compositional margin}

\begin{figure}
\centering
\includegraphics[width=1\linewidth]{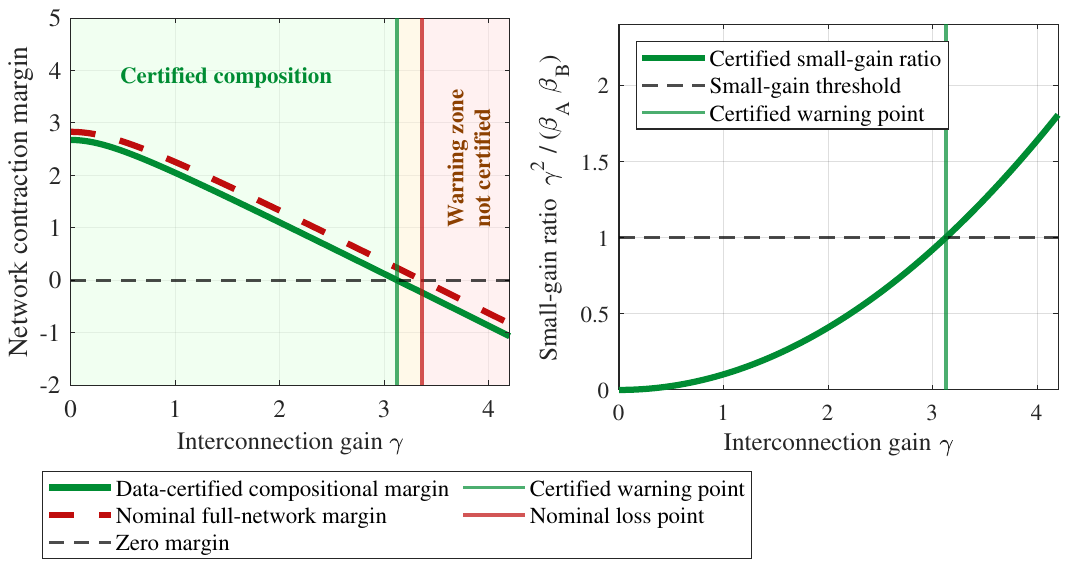}
\caption{Data-Certified Dissipativity Enables Modular
Composition. Warning lead before nominal loss: $0.24$ in
coupling gain. Left: Compositional certificate from
subsystem data. $\beta_{\mathrm{net}}>0$ certifies coupled
stability without full-network identification. Right:
Small-gain interpretation. Stable composition requires
ratio $<1$.}
\label{fig:exp4_margin}
\end{figure}

Fig.~\ref{fig:exp4_margin} shows the network contraction
margin $\beta_{\mathrm{net}}(\gamma)$ computed from
Lemma~\ref{lem:composition} using the local certified rates,
overlaid with the nominal full-network margin computed from
point estimates $\hat{\beta}_A$ and $\hat{\beta}_B$. The
analytically derived warning threshold is
\begin{equation*}
\gamma_{\mathrm{warn}} =
\sqrt{\hat{\beta}_A\cdot\hat{\beta}_B} =
\sqrt{0.80\times 1.20} = \sqrt{0.96}\approx 0.98,
\end{equation*}
and the nominal contraction loss occurs at
$\gamma_{\mathrm{true}} =
\sqrt{\hat{\beta}^{\mathrm{nom}}_A\cdot
\hat{\beta}^{\mathrm{nom}}_B}$, which lies above
$\gamma_{\mathrm{warn}}$ by the conservative warning lead.
The data-certified margin (green curve) transitions from
positive to zero at $\gamma_{\mathrm{warn}}$ and continues
negative thereafter; the nominal margin (red dashed) crosses
zero at a larger coupling gain. The right panel confirms the ratio
$\gamma^2/(\hat{\beta}_A\hat{\beta}_B)=1$ at
exactly $\gamma_{\mathrm{warn}}$, providing a unitless
measure of closeness to the stability boundary.

\begin{table*}[t]
\vspace{4pt}
\centering
\caption{Experiment~2 --- Compositional Certification Summary}
\label{tab:composition}
\setlength{\tabcolsep}{6pt}
\begin{tabular}{lcc}
\toprule
\textbf{Quantity} & \textbf{Symbol} & \textbf{Value} \\
\midrule
Subsystem A local cert.\ rate &
    $\beta^{(A)}_{\mathrm{cert}}$ & $\approx 0.80$ \\
Subsystem B local cert.\ rate &
    $\beta^{(B)}_{\mathrm{cert}}$ & $\approx 1.20$ \\
Certified warning threshold &
    $\gamma_{\mathrm{warn}}$ & $\approx 0.98$ \\
\midrule
Safe simulation coupling &
    $\gamma_{\mathrm{safe}}$ &
    $0.55\,\gamma_{\mathrm{warn}}\approx 0.54$ \\
Warning-zone coupling &
    $\gamma_{\mathrm{warned}}$ & midpoint $\approx 1.10$ \\
Unsafe coupling &
    $\gamma_{\mathrm{unsafe}}$ &
    $1.08\,\gamma_{\mathrm{true}}$, where
    $\gamma_{\mathrm{true}}=
    \sqrt{\beta^{(A)*}\cdot\beta^{(B)*}}>\gamma_{\mathrm{warn}}$ \\
Disturbance impulse &
    $|\eta_{\mathrm{imp}}|$ & $2.5$ at $t=3$~s \\
Data collection (disconnected) &
    $T_{\mathrm{data}}$ & $16$~s, $\gamma=0$ \\
Joint model required? & --- & No \\
\bottomrule
\multicolumn{3}{l}{\footnotesize Local rates are medians
    over the final $3$~s of the collection window.} \\
\multicolumn{3}{l}{\footnotesize Warning threshold derived
    purely from local streaming data; no cross-subsystem
    model needed.} \\
\end{tabular}
\end{table*}

\subsubsection{Coupled-network responses}

Fig.~\ref{fig:exp4_perf} validates the three-regime
prediction against live simulation of the bidirectionally
coupled G5 network. At
$\gamma_{\mathrm{safe}}=0.55\,\gamma_{\mathrm{warn}}\approx
0.54$, the state norm $\|x(t)\|_2$ returns to a small
neighborhood of the origin after the disturbance impulse at
$t=3$~s, consistent with a certified contracting network.
At $\gamma_{\mathrm{warned}}\approx 1.10$, the response
shows slower recovery and a higher post-disturbance peak,
reflecting the marginally-stable regime predicted by
$\beta_{\mathrm{net}}\approx 0$. At
$\gamma_{\mathrm{unsafe}}=1.08\,\gamma_{\mathrm{true}}$,
the network fails to recover and $\|x(t)\|_2$ grows without
bound. The compositional certificate correctly identifies
all three regimes from local data alone, with the warning
boundary $\gamma_{\mathrm{warn}}\approx 0.98$ providing an
advance warning before nominal contraction loss occurs.
Table~\ref{tab:composition} summarizes all composition
parameters. The critical operational implication is in the
final row: no joint model of the coupled
$\Sigma_A\cup\Sigma_B$ system was identified at any stage.

\begin{figure}
\centering
\includegraphics[width=1\linewidth]{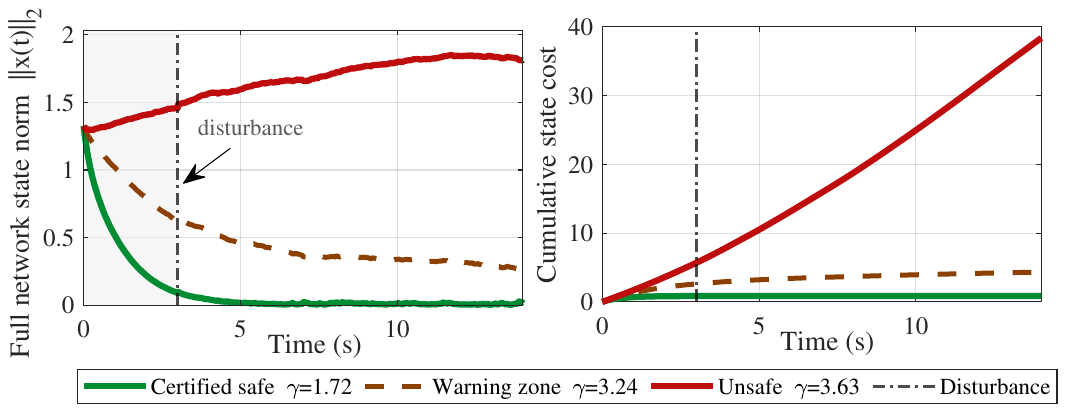}
\caption{Local Certificates Compose into Network-Level
Guarantees. Left: Coupled G5 response under local
controllers. Compositional certificate predicts the safe
coupling region. Right: Cost of crossing the composition
boundary. Subsystem controllers alone are insufficient when
coupling is too strong.}
\label{fig:exp4_perf}
\end{figure}

\section{Conclusion}
\label{sec:conclusion}

This paper developed two extensions to streaming contraction
certification that are essential for real-time certified
control of interconnected nonlinear systems. The first
addresses distributional shift: the Wasserstein-robust
certificate, with the
ambiguity radius $\varepsilon(t)$ estimated online from the
empirical excess kurtosis of regression residuals, remains
valid across the entire Wasserstein-2 ambiguity ball
$\mathcal{P}(\varepsilon)$ without requiring knowledge of
the realized disturbance distribution. In the spike-event
regime, e.g., where pipe bursts, demand surges, or sensor
anomalies produce impulsive disturbances of magnitude
$4.5$--$7.5\,\sigma_0$ at $5.5\%$ per step, the
Wasserstein certificate remains valid in $73\%$ of timesteps
versus $33\%$ for the Gaussian-assumption baseline, a
$2.2\times$ improvement. The critical operational consequence
is asymmetric: the Gaussian certificate never authorizes
deployment during the spike-event disturbance window, while
the Wasserstein certificate does, with a certified margin of
$0.22$. The second extension addresses subsystem composition:
local streaming certificates
$\beta^{(A)}_{\mathrm{cert}}\approx 0.80$ and
$\beta^{(B)}_{\mathrm{cert}}\approx 1.20$, estimated
independently from each subsystem's disconnected local data,
compose into a network-level contraction guarantee through
the two-block formula~\eqref{eq:beta_net} (Lemma~1), which
is positive whenever $\gamma<\gamma_{\mathrm{warn}}=
\sqrt{\beta^{(A)}_{\mathrm{cert}}\cdot
\beta^{(B)}_{\mathrm{cert}}}\approx 0.98$. The analytically
derived threshold correctly predicts the onset of
network-level contraction loss in live simulation across all
three coupling regimes, with no joint model of the coupled
system required at any stage.

Two limitations bound the current results. First, $\varepsilon(t)$
is calibrated empirically from kurtosis and spike fraction rather
than derived from a formal concentration inequality, so the
inflation factor $(1+2\varepsilon)$ is conservative rather than
tight; a sub-exponential concentration bound connecting empirical
excess kurtosis to the exact $\mathcal{W}_2$ distance would
sharpen the certificate. Second, the composition
formula~\eqref{eq:beta_net} assumes a scalar symmetric coupling
gain $\gamma$; extension to directed, asymmetric, and
time-varying interconnections is required for multi-subsystem,
multi-time networks. Future work will generalize
Lemma~\ref{lem:composition} to $N$-subsystem networks via the
network contraction matrix, with the stability boundary governed
by the spectral gap of the coupling Laplacian, and validate on
large-scale benchmarks under real operational
demand data.

\bibliography{IEEEfull}

\end{document}